\pgfplotsset{compat=1.16}
\newtheorem{lemma}{\bf Lemma}
\newtheorem{theorem}{\bf Theorem}
\newtheorem{corollary}{\bf Corollary}
\newtheorem{assumption}{\bf Assumption}
\acrodef{QMI}{Quadratic Matrix Inequality}
\acrodef{LTI}{linear time-invariant}
\acrodef{LFT}{linear fractional transformation}
\acrodef{SDP}{semidefinite program}
\acrodef{IQC}{integral quadratic constraint}
\title{A System Parameterization for Direct Data-Driven Estimator Synthesis}
\author{Felix Brändle and Frank Allgöwer 
	\thanks{F. Allg\"ower is thankful that this work was funded by the Deutsche Forschungsgemeinschaft (DFG, German Research Foundation) under Germany’s Excellence Strategy -- EXC 2075 -- 390740016 and within grant AL
		316/15-1 – 468094890. F. Br\"andle thanks the International Max Planck Research School for Intelligent Systems (IMPRS-IS) for supporting him.
	}
	\thanks{F. Br\"andle and F. Allg\"ower are with the University of Stuttgart, Institute for Systems
		Theory and Automatic Control, 70550 Stuttgart,
		Germany. (e-mail: \{felix.braendle, frank.allgower\}@ist.uni-stuttgart.de)}%
}
\begin{document}

\pubid{\begin{minipage}{\textwidth}\ \\[60pt] \copyright 2025 IEEE. This version has been accepted for publication in IEEE Control Systems Letters Volume 9. Personal use of this material is permitted. Permission from IEEE must be obtained for all other uses, in any current or future media, including reprinting/republishing this material for advertising or promotional purposes, creating new collective works, for resale or redistribution to servers or lists, or reuse of any copyrighted component of this work in other works.\end{minipage}}

\maketitle

\begin{abstract}
This paper introduces a novel parameterization to characterize unknown \acl{LTI} systems using noisy data.
The presented parameterization describes exactly the set of all systems consistent with the available data.
We then derive verifiable conditions when the consistency constraint reduces the set to the true system and when it does not have any impact.
Furthermore, we demonstrate how to use this parameterization to perform a direct data-driven estimator synthesis with guarantees on the $\mathcal{H}_\infty$-norm.
Lastly, we conduct numerical experiments to compare our approach to existing methods.
\end{abstract}

\begin{keywords}
	Data driven control, Identification for control, Robust control
\end{keywords}

\section{INTRODUCTION}
In recent years, interest in system analysis and controller design based on collected data has been increasing \cite{Coulson2019, Martin2023a}.
A common approach involves first identifying a model from data and then employing model-based techniques to synthesize a controller or verify system properties.
Typically, only a finite number of noisy data points are available, making identifying the true system challenging \cite{Oymak2019}.
The difference between true and identified system may lead to instabilities when the identified model is used for controller synthesis.
Direct data-driven control methods aim to address this problem by not identifying a single model, but instead parameterizing a set of systems using the available data. 
This data-dependent set is constructed to contain the true system.
By using robust control techniques, it is possible to give guarantees for each element of the set, and hence also for the true system \cite{Berberich2019}.
However, the complexity of these set-membership techniques grows rapidly with the number of samples \cite{Castano2011} and may require computationally intensive methods like sum-of-squares \cite{Miller22}.
So the overall goal is to find a tight parameterization that can also be used with efficient synthesis algorithms.

A commonly used approach is based on the work in \cite{DePersis2020}.
This approach parameterizes the set using a prior known bound on the noise and a right inverse of the data matrices.
However, this approach overapproximates the set by not including the additional knowledge that the noise must also satisfy the model equation \cite{Berberich2019, Koch2020}.
Results based on the data-informativity framework \cite{Waarde2020a, Waarde2022} are particularly attractive, because they exactly parameterize all systems consistent with the observed data and a known prior bound on the noise, leading to a less conservative set description.
Using robust control techniques, it is possible to synthesize a guaranteed stabilizing state feedback controller or verify dissipativity properties.
Another difference between these approaches is whether the parameterization is derived for the primal or the dual space.
The data-informativity framework uses a \ac{QMI} with the transposed parameter matrix leading to a description in the dual space. 
In contrast, the approach based on the right inverse provides a direct parameterization in the primal space without transposing the parameter matrix.
The dual space is particularly suited for robust system analysis and robust state feedback synthesis \cite[Section 8.3]{Weiland1994}, as it leads to a convex \ac{SDP}.
Meanwhile, the primal space is well suited for solving the robust estimator synthesis, robust system analysis, and \acp{IQC} \cite{Holicki2023a}, since these problems can be convexified.
However, robust state feedback synthesis does not lead to a convex problem in primal space and vice versa for the estimator synthesis in dual space, such that each space is better suited to solve different control problems.

One approach to handle the difference between dual and primal is described in \cite{Mishra2022}, which employs the Dualization Lemma \cite[Lemma 4.9]{Weiland1994} to transform equivalently between dual and primal space in order to synthesize a robust estimator in primal space. 
However, this method requires the system matrices to be known, making this approach not suitable in many setups. 
Furthermore, some additional invertibility conditions must be met to apply the Dualization Lemma.
Other approaches in the primal space necessitate noiseless data \cite{Turan2022} or involve parameter tuning to ensure stability for sufficiently small noise during the data collection phase \cite{Liu2023a}.

In this work, we derive an explicit and consistent system parameterization for a given dataset and employ it to synthesize an estimator with a guaranteed upper bound on the closed loop $\mathcal{H}_\infty$-norm even for a finite number of noisy data points.
Our approach extends the data-informativity framework to the primal space, without requiring additional invertibility assumptions to apply dualization.
Furthermore, we compare our approach to the one based on the right inverse and derive conditions when both approaches are equal to bridge the gap between the right inverse approach and the data-informativity framework.
We demonstrate the usefulness of the proposed parameterization as a analysis tool by deriving verifiable conditions on the data, when consistency leads to exact identification of the true system.

\section{Notation}

We denote the $n\times n$ identity matrix and the $p\times q$ zero matrix as $I_n$ and $0_{p\times q}$, respectively. 
We omit the indices, if the dimensions are clear from context. 
For the sake of space limitations, we use $\star$, if the the corresponding matrix can be inferred from symmetry. 
Moreover, we use $P\succ 0$ ($P\succeq 0$), if the symmetric matrix $P$ is positive (semi) definite. 
Negative (semi) definiteness is defined by $P\prec 0$ ($P\preceq 0$). 
For measurements $\{x_k,x_{k+1}\}_{k=0}^{N-1}$, we define the following matrices
\begin{equation*}
	X=\begin{bmatrix}
		x_0 & x_1 & \ldots & x_{N-1}
	\end{bmatrix},\quad
	X_+=\begin{bmatrix}
		x_1 & x_2 & \ldots & x_{N}
	\end{bmatrix}.
\end{equation*}
In addition, we use $A^\perp$ to denote a matrix containing a basis of the kernel of $A$ as rows or in case if $A$ has more rows than columns, then $A^\perp$ contains a basis of the kernel of $A^\top$ as columns, such that $AA^{\perp\top}=0$ or $A^{\perp\top}A=0$.
The symbol $\sigma_i(A)$ denotes the $i$'th singular value of $A$, $\sigma_{\mathrm{min}}(A)$ and $\sigma_{\mathrm{max}}(A)$ are the minimal and maximal singular values. 
Furthermore, we use $\mathrm{diag}(A_1,\ldots,A_\mathrm{k})$ to abbreviate a block diagonal matrix with $A_i$ as corresponding blocks.

\section{DATA-DRIVEN SYSTEM PARAMETRIZATION}\label{sec:Param}

In this work, we consider a linear regression model of the following form
\begin{equation}
	y = \Theta_{\mathrm{tr}} x,
\end{equation}
with regressand $y\in\mathbb{R}^p$, regressor $x\in\mathbb{R}^n$ and the true, but unknown parameter matrix $\Theta_{\mathrm{tr}}\in\mathbb{R}^{p\times n}$. Since $\Theta_{\mathrm{tr}}$ is unknown, we assume to have access to noisy data $\{x_k,y_k\}_{k=0}^{N-1}$, which is generated according to
\begin{equation}
	y_k = \Theta_{\mathrm{tr}} x_k + w_k,
\end{equation}
with unknown additive noise $w_k\in\mathbb{R}^p$. We rewrite the problem in matrix notation
\begin{equation}
	Y = \Theta_{\mathrm{tr}} X + W	\label{eq:param:SystemEq}
\end{equation}
with $Y\in\mathbb{R}^{p\times N}$, $X\in\mathbb{R}^{n\times N}$, and $W\in\mathbb{R}^{p\times N}$. Moreover, we assume $W$ belongs to the set $\mathcal{W}$
\begin{equation*}
	\mathcal{W}\coloneq\left\{ W\in\mathbb{R}^{p\times N}\mid
	\begin{bmatrix}W\\ I\end{bmatrix}^\top 
	\Phi
	\begin{bmatrix}	W\\ I	\end{bmatrix}\succeq 0 \right\}\label{eq:Param:W}
\end{equation*}
with known $\Phi=\Phi^\top\in\mathbb{R}^{(p+N) \times (p+N)}$ to characterize $\mathcal{W}$ by a \ac{QMI} as in \cite{Berberich2019}.
With this, we can define the set of all matrices $\Theta$ consistent with the data $\{x_k,y_k\}_{k=0}^{N-1}$ as
\begin{equation*}
	\Sigma_\Theta \coloneq \{\Theta\in\mathbb{R}^{p\times n}\mid \exists W\in\mathcal{W}:Y=\Theta X + W\}.
\end{equation*}

In the following, we denote the constraint $Y=\Theta X + W$ as consistency constraint to indicate that every element of $\Sigma_\Theta$ is able to recreate the data for some $W\in\mathcal{W}$. 
If a property holds for all $\Theta \in \Sigma_\Theta$, it is also guaranteed for the true system $\Theta_\mathrm{tr}$.
In the following, we reformulate $\Sigma_\Theta$ in terms of a single \ac{QMI} to apply robust control methods.
To do so, we first impose the following assumption on the data.
\begin{figure*}[b!]
	\noindent\makebox[\linewidth]{\rule{\textwidth}{0.4pt}}
	\begin{align}
		\tag{8}
		\begin{bmatrix}
			\Theta - \Theta_0\\ I 
		\end{bmatrix}^\top 
		\begin{bmatrix}
			-I & 0\\0 & G
		\end{bmatrix}^\top
		\left(\Phi - \Phi \begin{bmatrix}\tilde{\Theta}_0 \\ \tilde{G}\end{bmatrix} M^{-1} \begin{bmatrix}\tilde{\Theta}_0 \\ \tilde{G}\end{bmatrix}^\top \Phi \right)
		\begin{bmatrix}
			-I & 0\\0 & G
		\end{bmatrix}
		\begin{bmatrix}
			\Theta - \Theta_0\\ I 
		\end{bmatrix} \succeq 0 \label{eq:Param:TheoConsistQMI}
	\end{align}
\end{figure*}
\vspace{0.5em}

\begin{assumption} \label{ass:param:Slater}
	There exist matrices $\bar{W}\in\mathbb{R}^{p\times N}$ and $\bar{\Theta}\in\mathbb{R}^{p\times n}$ satisfying
	\begin{gather}
		Y=\bar{\Theta} X + \bar{W}, \label{eq:Param:Ass:Dyn}\\
		\begin{bmatrix}	\bar{W}\\ I	\end{bmatrix}^\top
		\Phi
		\begin{bmatrix}	\bar{W}\\ I	\end{bmatrix}\succ 0. \label{eq:Param:Ass}
	\end{gather}
\end{assumption}

This is a Slater condition and ensures the existence of a relative interior of $\Sigma_\Theta$. 
Note, the existence of a relative interior of $\mathcal{W}$ does not suffice to satisfy Assumption\,\ref{ass:param:Slater}, as \eqref{eq:Param:Ass:Dyn} may intersect $\mathcal{W}$ only at the boundary, guaranteeing only satisfaction with $\succeq$.
If needed, strict satisfaction is achieved by replacing $\Phi$ with $\Phi+\mathrm{diag}(0,\epsilon I)$ and $\epsilon>0 $ sufficiently small.
 Using this assumption, we can state the main theorem of the paper.

\begin{theorem} \label{theo:Param:Consistency}
	Suppose Assumption\,\ref{ass:param:Slater} is satisfied and there exists $\tilde{X}\in\mathbb{R}^{(N-n)\times N}$, $\Theta_0\in\mathbb{R}^{p \times n}$, $\tilde{\Theta}_0\in\mathbb{R}^{p \times (N-n)}$, $M\in\mathbb{R}^{(N-n)\times (N-n)}$, $G\in\mathbb{R}^{N\times n}$, and $\tilde{G}\in\mathrm{R}^{N\times (N-n)}$ with
	\begin{align}
		Y &= \begin{bmatrix} \Theta_0& \tilde{\Theta}_0	\end{bmatrix} \begin{bmatrix}X \\ \tilde{X}	\end{bmatrix} \label{eq:Param:Splitting}		
	\end{align}
	and 
	\begin{align}
		& I =\begin{bmatrix}X \\ \tilde{X}\end{bmatrix}	\begin{bmatrix}G & \tilde{G}\end{bmatrix}, & 
		 M = \begin{bmatrix} \tilde{\Theta}_0 \\ \tilde{G}\end{bmatrix}^\top \Phi \begin{bmatrix} \tilde{\Theta}_0 \\ \tilde{G}\end{bmatrix}.\label{eq:Theor:Space} 
	\end{align}
	Then it holds
	\begin{align*}
	\Sigma_\Theta = \big\{ \Theta\in\mathbb{R}^{p\times n} \mid  \eqref{eq:Param:TheoConsistQMI} \text{ is satisfied.}
	\big\}.
	\setcounter{equation}{\value{equation}+1}
	\end{align*}
\end{theorem}
\begin{proof}
	First, we show that Assumption\,\ref{ass:param:Slater} implies $M\succ 0$.
	To this end, we insert \eqref{eq:Param:Ass:Dyn} and \eqref{eq:Param:Splitting} into \eqref{eq:Param:Ass} to get
	\begin{equation*}
		\begin{bmatrix}	\begin{bmatrix} \Theta_0-\bar{\Theta} & \tilde{\Theta}_0 \end{bmatrix} \begin{bmatrix} X \\ \tilde{X} \end{bmatrix}\\ I	\end{bmatrix}^\top
		\Phi
		\begin{bmatrix}	\begin{bmatrix} \Theta_0-\bar{\Theta} & \tilde{\Theta}_0 \end{bmatrix} \begin{bmatrix} X \\ \tilde{X} \end{bmatrix}\\ I	\end{bmatrix}
		\succ 0.
	\end{equation*}
	By construction, $\tilde{G}$ has full column rank, hence we pre- and post-multiply with $\tilde{G}^\top$ and $\tilde{G}$, which yields $M\succ 0$.
	
	Applying the same steps for a general $\Theta\in\Sigma_\Theta$, but performing a congruence transformation with $[G\;\tilde{G}]$ yields
	\begin{equation*}
		\begin{bmatrix}
			\begin{bmatrix}\hat{\Theta} \\ G\end{bmatrix}^\top \Phi \begin{bmatrix}\hat{\Theta} \\ G\end{bmatrix} & 
			\begin{bmatrix}\hat{\Theta} \\ G\end{bmatrix}^\top \Phi \begin{bmatrix}\tilde{\Theta}_0 \\ \tilde{G}\end{bmatrix} \\
			\begin{bmatrix}\tilde{\Theta}_0 \\ \tilde{G}\end{bmatrix}^\top \Phi \begin{bmatrix}\hat{\Theta} \\ G\end{bmatrix} &
			\begin{bmatrix}\tilde{\Theta}_0 \\ \tilde{G}\end{bmatrix}^\top \Phi \begin{bmatrix}\tilde{\Theta}_0 \\ \tilde{G}\end{bmatrix}
		\end{bmatrix} \succeq 0
	\end{equation*} 
	with $\hat{\Theta}=\Theta_0- \Theta$.
	By Assumption\,\ref{ass:param:Slater}, we know that the lower right block is positive definite and hence invertible, which allows us to employ the Schur complement to get
	\begin{equation*}
		\begin{bmatrix}\hat{\Theta} \\ G\end{bmatrix}^\top \Phi \begin{bmatrix}\hat{\Theta} \\ G\end{bmatrix} - \begin{bmatrix}\hat{\Theta} \\ G\end{bmatrix}^\top \Phi \begin{bmatrix}\tilde{\Theta}_0 \\ \tilde{G}\end{bmatrix} M^{-1}   \begin{bmatrix}\tilde{\Theta}_0 \\ \tilde{G}\end{bmatrix}^\top \Phi \begin{bmatrix}\hat{\Theta} \\ G\end{bmatrix}\succeq 0.
	\end{equation*}
	Sorting all terms and reinserting $\hat{\Theta}=\Theta_0- \Theta$ results in \eqref{eq:Param:TheoConsistQMI}.
	Since all steps hold with equivalence, we can conclude that any $\Theta$ satisfying \eqref{eq:Param:TheoConsistQMI} must also satisfy $\Theta\in\Sigma_\Theta$.
\end{proof}
\begin{figure}[t]
	\begin{center}
		\begin{tikzpicture}[scale = 1.7]
			\draw[->] (-1.05,0) -- (1.2,0) node[below]{$\tilde{X}$};
			\draw[->] (0,-1.2) -- (0,1.2) node[left]{$X$};
			\draw (0,0) circle(1);
			\node at (0.85,0.85){$\mathcal{W}$};
			
			\draw [-] (-0.5,{-sqrt(3/4)}) -- (-0.5,{sqrt(3/4)});
			
			\draw [thick,decoration={brace,	raise=1	},decorate] (0,0)--(-0.5,0);
			\node at (-0.25,-0.2){$\tilde{\Theta}_0 \tilde{X}$};
			
			\draw[dashed] (-0.5,{sqrt(3/4)}) -- (-1.1,{sqrt(3/4)});
			\draw[dashed] (-0.5,{-sqrt(3/4)}) -- (-1.1,{-sqrt(3/4)});
			\draw [thick,decoration={brace, mirror,	raise=1	},decorate] (-1.1,{sqrt(3/4)})--(-1.1,-{sqrt(3/4)});
			\node[rotate=90] at (-1.3,0){$(\Theta_0 -\Theta)X$};
		\end{tikzpicture}
	\end{center}
	\caption{2D-representation for $\Phi = \mathrm{diag}(-I,I)$.}\label{fig:ana:2D-Visualization}
\end{figure}
Using Theorem\,\ref{theo:Param:Consistency}, we can parameterize all $\Theta\in\Sigma_\Theta$ by satisfaction of \eqref{eq:Param:TheoConsistQMI}. 
This is an equivalent description as in \cite{Waarde2022}, but in primal space instead of dual.
Transforming between primal and dual space is possible using dualization, but requires additional invertibility assumptions and is restricted to bounded sets \cite{Mishra2022}.
A simplified visualization of the proof is shown in Fig.\,\ref{fig:ana:2D-Visualization} for $\Phi = \mathrm{diag}(-I,I)$ and interpreting $X$ and $\tilde{X}$ as two basis vectors of a vector space to make up $W$. 
All possible consistent $W$ must satisfy $W = \tilde{\Theta}_0 \tilde{X} + (\Theta_0- \Theta )X\in\mathcal{W}$. 
However, since $\tilde{\Theta}_0$ is fixed and part of every consistent $W$, this restricts all possible $\Theta$.
In contrast, multiplying \eqref{eq:Param:Ass:Dyn} with $G$ to find an expression in terms of $WG$ leads to additional conservatism, since $\tilde{\Theta}_0\tilde{X}G=0$.
Hence, the effects of  $\tilde{\Theta}_0\tilde{X}$ are neglected by projecting on a lower-dimensional subspace. 
Note that \eqref{eq:Theor:Space} implies $X$ has full row rank with $X^\perp$ as a possible choice of $\tilde{X}$ and $G$ being a right inverse of $X$.

\section{Analysis} \label{sec:Analysis}
In this section, we analyze the effects of including the consistency constraint by comparing Theorem\,\ref{theo:Param:Consistency} to \cite{DePersis2020, Berberich2019, Liu2023a} showcasing its usefulness as general analysis tool.
This approach drops the consistency constraint, but adds the right inverse of $X$ as degree of freedom to reduce conservatism.
Hence, we quantify the effects of the right inverse compared to including the consistency constraint.
For simplicity, we assume $\Phi=\mathrm{diag}(-Q,R)$ with $0\prec Q\in\mathbb{R}^{p\times p}$ and $0\prec R\in\mathbb{R}^{N \times N}$. 
We first transform \eqref{eq:Param:TheoConsistQMI} to a block diagonal structure.
\begin{corollary} \label{corr:param:consistency}
	Suppose Assumption\,\ref{ass:param:Slater} is satisfied, $R\succ 0$ and $X$ has full row rank,
	then 
	\begin{align*}
		&\Sigma_\Theta = \left\{ \Theta\in\mathbb{R}^{p\times n} \mid \begin{bmatrix}
			\Theta - \Theta_0\\ I 
		\end{bmatrix}^\top P \begin{bmatrix}
			\Theta - \Theta_0\\ I 
		\end{bmatrix} \succeq 0\right\},
	\end{align*}
	with 
	\begin{equation*}
		P = \begin{bmatrix}
			-Q - Q\tilde{\Theta}_0 M^{-1} \tilde{\Theta}_0^\top Q & 0 \\
			0 & (XR^{-1}X^\top)^{-1}
		\end{bmatrix}
	\end{equation*}
	and
	\begin{align}
		Y= \begin{bmatrix} \Theta_0& \tilde{\Theta}_0	\end{bmatrix} \begin{bmatrix}X \\ X^\perp R	\end{bmatrix}\\
		M = (X^\perp R X^{\perp\top})^{-1} - \tilde{\Theta}_0^\top Q \tilde{\Theta}_0.
	\end{align}
\end{corollary}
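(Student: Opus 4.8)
The plan is to specialize Theorem~\ref{theo:Param:Consistency} by committing to a concrete choice of the free matrices $\tilde{X}$, $G$, $\tilde{G}$ that decouples the cross terms in \eqref{eq:Param:TheoConsistQMI}. Since $R\succ0$ and $X$ has full row rank, the matrices $XR^{-1}X^\top$ and $X^\perp R X^{\perp\top}$ are positive definite, hence invertible, and I would record this observation first. I would then propose $\tilde{X}=X^\perp R$, $G=R^{-1}X^\top(XR^{-1}X^\top)^{-1}$ and $\tilde{G}=X^{\perp\top}(X^\perp R X^{\perp\top})^{-1}$.

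Next I would verify that this choice satisfies the partitioning identity \eqref{eq:Theor:Space}. The four block equations $XG=I$, $X\tilde{G}=0$, $X^\perp R G=0$ and $X^\perp R\tilde{G}=I$ all follow from the defining relation $XX^{\perp\top}=0$ (equivalently $X^\perp X^\top=0$) together with the definitions of the two inverses; in particular $X^\perp R G = X^\perp X^\top(XR^{-1}X^\top)^{-1}=0$. As a byproduct this shows $\begin{bmatrix}X\\ X^\perp R\end{bmatrix}$ is invertible with inverse $\begin{bmatrix}G&\tilde{G}\end{bmatrix}$, so the splitting of $Y-W_0$ stated in the corollary is well defined and uniquely determines $\Theta_0$ and $\tilde{\Theta}_0$. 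With \eqref{eq:Theor:Space} established, Theorem~\ref{theo:Param:Consistency} applies verbatim to this choice, and in particular it guarantees $M\succ0$.

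It then remains to evaluate the three matrix products entering \eqref{eq:Param:TheoConsistQMI}. A direct computation gives $G^\top R G=(XR^{-1}X^\top)^{-1}$ and $\tilde{G}^\top R\tilde{G}=(X^\perp R X^{\perp\top})^{-1}$, whence $M=(X^\perp R X^{\perp\top})^{-1}-\tilde{\Theta}_0^\top Q\tilde{\Theta}_0$ as claimed. The key simplification is $\tilde{G}^\top R G=(X^\perp R X^{\perp\top})^{-1}X^\perp X^\top(XR^{-1}X^\top)^{-1}=0$, again by $X^\perp X^\top=0$. Substituting $\tilde{G}^\top R G=0$ into \eqref{eq:Param:TheoConsistQMI} annihilates both off-diagonal blocks as well as the correction term $-G^\top R\tilde{G}M^{-1}\tilde{G}^\top R G$ in the lower-right block, leaving precisely the block-diagonal $P$ stated in the corollary.

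I do not expect a genuine obstacle here: once the choice $\tilde{X}=X^\perp R$ is fixed, everything reduces to bookkeeping with the orthogonality relation $X^\perp X^\top=0$ and the definitions of the inverses. The only creative step is recognizing that scaling the kernel basis by $R$ (rather than using $X^\perp$ directly) is exactly what makes the $R$-weighted cross term $\tilde{G}^\top R G$ vanish and thus diagonalizes $P$; one should also note in passing that $Q\succ0$ is not actually needed for this corollary beyond what Assumption~\ref{ass:param:Slater} already supplies through Theorem~\ref{theo:Param:Consistency}.
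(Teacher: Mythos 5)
Your proposal is correct and follows exactly the paper's route: the paper likewise proves this corollary as a special case of Theorem~\ref{theo:Param:Consistency} with the choices $\tilde{X}=X^\perp R$, $G=R^{-1}X^\top(XR^{-1}X^\top)^{-1}$, $\tilde{G}=X^{\perp\top}(X^\perp R X^{\perp\top})^{-1}$, using $\tilde{G}^\top R G=0$ and $G^\top R G=(XR^{-1}X^\top)^{-1}$ to diagonalize the QMI. You simply spell out the bookkeeping (verification of \eqref{eq:Theor:Space} and the block evaluations) that the paper leaves implicit.
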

\begin{proof}
	This is a special case of Theorem\,\ref{theo:Param:Consistency}, by choosing $\tilde{X}=X^\perp R$, $G=R^{-1}X^\top(XR^{-1}X^\top)^{-1}$ and $\tilde{G}=X^{\perp\top}(X^\perp R X^{\perp\top})^{-1}$ to get $G^\top R \tilde{G}=0$ and $G^\top RG=(XR^{-1}X^\top)^{-1}$.
\end{proof}

Now, we introduce a parameterization, which does not include the consistency constraint.
\begin{theorem}[\hspace{-0.08em}\cite{Berberich2019,Koch2020}]\label{Theo:Ana:InConsistFull}
Suppose $X$ has full row rank.
If there exists a matrix $G\in\mathbb{R}^{N \times n}$ such that
\begin{equation*}
	XG = I,
\end{equation*}
then $\Sigma_\Theta \subseteq \Sigma_\Theta^\mathrm{S,G} \coloneq  \left\{\Theta \mid \Theta = (Y-W)G, \quad W\in\mathcal{W} \right\}$.
\end{theorem}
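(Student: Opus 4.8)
The plan is to prove the set inclusion directly, by taking an arbitrary element of $\Sigma_\Theta$ and exhibiting it as an element of $\Sigma_\Theta^\mathrm{S,G}$. So first I would fix some $\Theta \in \Sigma_\Theta$. By the definition of $\Sigma_\Theta$, there exists a noise matrix $W \in \mathcal{W}$ such that the system equation $Y = \Theta X + W$ holds for this $\Theta$.

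The key step is then purely algebraic: rearrange the system equation to isolate the data residual as $Y - W = \Theta X$, and right-multiply both sides by $G$. Using the assumed right-inverse property $XG = I$, this yields $(Y-W)G = \Theta X G = \Theta$. Hence $\Theta = (Y-W)G$ with the same $W \in \mathcal{W}$, which is precisely the membership condition for $\Sigma_\Theta^\mathrm{S,G}$. Since $\Theta$ was arbitrary, the claimed inclusion $\Sigma_\Theta \subseteq \Sigma_\Theta^\mathrm{S,G}$ follows. The full-row-rank hypothesis on $X$ is what guarantees that such a $G$ exists in the first place (e.g.\ $G = X^\top(XX^\top)^{-1}$), so it plays only an enabling role.

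There is essentially no obstacle here; the argument is a one-line manipulation once the right inverse is available. The point worth emphasizing in the write-up is rather the \emph{direction} of the inclusion and why it is in general strict: membership in $\Sigma_\Theta^\mathrm{S,G}$ only forces $\Theta X = (Y-W)GX$, and $GX \neq I$ whenever $X$ is not square, so an element of $\Sigma_\Theta^\mathrm{S,G}$ need not satisfy $Y = \Theta X + W$ for any $W \in \mathcal{W}$. This is exactly the overapproximation alluded to in the introduction, and contrasting it with the exact characterization of Theorem~\ref{theo:Param:Consistency} sets up the subsequent comparison between the two parameterizations.
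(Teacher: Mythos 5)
Your proof is correct. The paper itself gives no proof of this statement (it is quoted from \cite{Berberich2019,Koch2020}), and your argument --- fix $\Theta\in\Sigma_\Theta$ with its witness $W\in\mathcal{W}$, right-multiply $Y-W=\Theta X$ by $G$ and use $XG=I$ --- is the standard direct verification, with the additional (accurate) remark on why the inclusion is generally strict matching the paper's own discussion following the theorem.
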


Note that in comparison to Theorem\,\ref{theo:Param:Consistency}, the consistency constraint $Y=\Theta X + W$ is dropped. 
Instead, by multiplying with the right inverse $G$, the problem is projected on a lower-dimensional subspace.
The right inverse $G$ is a decision variable, which affects the tightness of the overapproximation. 
In Theorem\,\ref{theo:Param:Consistency}, on the other hand, different $\tilde{X}$ do not affect $\Sigma_\Theta$.
For a better comparison, we now transform $\Sigma_\Theta^\mathrm{S,G}$ into a \ac{QMI} representation as in Theorem\,\ref{theo:Param:Consistency} and Corollary\,\ref{corr:param:consistency}.
To do so, we first introduce the following Lemma.
\begin{lemma}[\hspace{-0.04em}\cite{Waarde2023a}] \label{lemma:Ana:GTrick}
	Suppose $R\succ 0$, $W_0\in\mathbb{R}^{p \times N}$ and $G\in\mathbb{R}^{N \times n}$ has full column rank, then
	\begin{align*}
		&\left\{ WG\in\mathbb{R}^{p \times n} \mid \star^\top \begin{bmatrix} - Q & 0 \\ 0 & R	\end{bmatrix} \begin{bmatrix} W-W_0 \\ I\end{bmatrix} \succeq 0 \right\} \\
		= &\left\{\Delta\in\mathbb{R}^{p \times n} \mid \star^\top \begin{bmatrix} - Q & 0 \\ 0 & G^\top R G	\end{bmatrix} \begin{bmatrix} \Delta-W_0G \\ I\end{bmatrix} \succeq 0\right\}.
	\end{align*}
\end{lemma}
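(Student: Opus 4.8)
The plan is to prove the two inclusions separately, after writing out the quadratic forms. The left-hand set is $\{WG\mid R-(W-W_0)^\top Q(W-W_0)\succeq0\}$ and the right-hand set is $\{\Delta\mid G^\top RG-(\Delta-W_0G)^\top Q(\Delta-W_0G)\succeq0\}$.

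For the inclusion ``$\subseteq$'', I would take an arbitrary $W$ feasible for $\mathcal{W}$ and apply a congruence transformation of its defining inequality with $G$: pre- and post-multiplying $R-(W-W_0)^\top Q(W-W_0)\succeq0$ by $G^\top$ and $G$ gives $G^\top RG-(WG-W_0G)^\top Q(WG-W_0G)\succeq0$, so $\Delta:=WG$ lies in the right-hand set. This direction uses neither $R\succ0$ nor full column rank of $G$.

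For the inclusion ``$\supseteq$'', I would construct, for every $\Delta$ in the right-hand set, a matrix $W\in\mathcal{W}$ with $WG=\Delta$. Setting $V:=W-W_0$ and $D:=\Delta-W_0G$, the task reduces to: given $G^\top RG\succeq D^\top QD$, find $V$ with $VG=D$ and $R\succeq V^\top QV$. Since $R\succ0$ and $G$ has full column rank, $G^\top RG$ is invertible, and I propose the $R$-weighted right inverse $V:=D(G^\top RG)^{-1}G^\top R$, which satisfies $VG=D$ by construction. It then remains to verify $V^\top QV\preceq R$: substituting $V$ and using the hypothesis $D^\top QD\preceq G^\top RG$ yields $V^\top QV=RG(G^\top RG)^{-1}D^\top QD(G^\top RG)^{-1}G^\top R\preceq RG(G^\top RG)^{-1}G^\top R$. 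Writing $S:=R^{1/2}G$, which again has full column rank, the right-hand side equals $R^{1/2}S(S^\top S)^{-1}S^\top R^{1/2}$; since $S(S^\top S)^{-1}S^\top$ is an orthogonal projector it is $\preceq I$, hence $V^\top QV\preceq R^{1/2}R^{1/2}=R$. Thus $W:=W_0+V\in\mathcal{W}$ and $\Delta=WG$ lies in the left-hand set.

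The main obstacle is the ``$\supseteq$'' direction, and specifically the choice of the particular solution of $VG=D$: the naive least-squares inverse $D(G^\top G)^{-1}G^\top$ does not interact well with the $Q$/$R$-weighting, whereas the $R$-weighted inverse above turns $RG(G^\top RG)^{-1}G^\top R$ into an orthogonal projector in the $R^{1/2}$-coordinates, which is exactly what makes $V^\top QV\preceq R$ go through. Everything else is routine matrix algebra.
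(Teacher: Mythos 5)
Your proof is correct and follows essentially the same route as the paper: the $\subseteq$ direction is identical, and your candidate $W_0+(\Delta-W_0G)(G^\top RG)^{-1}G^\top R$ is algebraically the same matrix as the paper's $\hat W$, since the paper's second term satisfies $W_0R^{-1}G^\perp(G^{\perp\top}R^{-1}G^\perp)^{-1}G^{\perp\top}=W_0\bigl(I-G(G^\top RG)^{-1}G^\top R\bigr)$. The only (cosmetic) difference is the feasibility check, where the paper performs a congruence transformation with the invertible matrix $[G\;\;R^{-1}G^\perp]$ to reach a block-diagonal inequality, while you bound $RG(G^\top RG)^{-1}G^\top R\preceq R$ directly via the orthogonal projector in $R^{1/2}$-coordinates; both verifications are sound.
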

\begin{proof}
	In the following, we show equivalence of both parametrizations by considering each set inclusion separately.
	$\subseteq$: This fact follows directly from defining $\Delta = WG$ and pre- and postmultiplying the \ac{QMI} in $W$ with $G^\top$ and $G$. \\
	$\supseteq$: Let $\Delta$ satisfy the corresponding \ac{QMI} and define the following candidate solution
	\begin{equation*}
		\hat{W} = \Delta (G^\top R G)^{-1} G^\top R + W_0R^{-1}G^\perp (G^{\perp\top} R^{-1}G^\perp)^{-1}G^{\perp\top}.
	\end{equation*}
	Clearly, it holds $\hat{W}G = \Delta$.
	Invertibility of $G^\top R G$, and $G^{\perp\top} R^{-1}G^\perp$ follows from $R\succ 0$ and $G$ having full column rank. Furthermore, it holds
	\begin{equation*}
		\begin{bmatrix}
			(G^\top RG)^{-1}G^\top R \\ (G^{\perp\top}R^{-1}G^{\perp})^{-1}G^{\perp\top}
		\end{bmatrix}
		\begin{bmatrix}
			G & R^{-1}G^\perp
		\end{bmatrix} = I.
	\end{equation*}
	Hence, we conclude that $[G\quad R^{-1}G^\perp]$ is invertible.
	Now we can plug $\hat{W}$ into the corresponding \ac{QMI} and perform a congruence transformation with $[G\quad R^{-1}G^\perp]$ to get
	\begin{equation*}
		\begin{bmatrix}
			 G^\top R G-\star^\top Q(\Delta - W_0G)  & 0 \\
			                                                  0 & G^{\perp\top}R^{-1}G^\perp
		\end{bmatrix} \succeq 0,
	\end{equation*}
	which is satisfied by definition of $\Delta$ and $R\succ0$.
\end{proof}

This allows us to represent the uncertainty in terms of the product $WG$ instead of $W$, with $W_0$ representing an offset.
This is summarized in the following theorem.
\begin{theorem}	\label{theo:Ana:Superset}
	Suppose $R\succ 0$ and $X$ has full row rank. If there exists a matrix $G\in\mathbb{R}^{N\times n}$ such that
	\begin{equation*}
		XG = I,
	\end{equation*}
	then
	\begin{align*}
		\Sigma_\Theta^\mathrm{S,G} 	&= \left\{ \Theta \mid \star^\top \begin{bmatrix} -Q & 0 \\ 0 & G^\top RG		\end{bmatrix} \begin{bmatrix} \Theta - YG \\ I\end{bmatrix}\succeq 0\right\}
	\end{align*}
\end{theorem}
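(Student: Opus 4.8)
The plan is to reduce the statement to Lemma~\ref{lemma:Ana:GTrick} by an affine change of variables. First I would record that $XG = I$ together with full row rank of $X$ forces $G$ to have full column rank, since $Gv = 0$ implies $v = XGv = 0$. Hence, as $R \succ 0$, Lemma~\ref{lemma:Ana:GTrick} is applicable to this particular $G$.

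Next, observe that $\Theta \in \Sigma_\Theta^{\mathrm{S,G}}$ means precisely that $\Theta = (Y-W)G = YG - WG$ for some $W \in \mathcal{W}$, i.e.\ that $\Delta := YG - \Theta$ belongs to $\{WG \mid W \in \mathcal{W}\}$. By Lemma~\ref{lemma:Ana:GTrick} this set equals $\big\{\Delta \mid \star^\top \begin{bmatrix} -Q & 0 \\ 0 & G^\top R G\end{bmatrix}\begin{bmatrix}\Delta - W_0 G \\ I\end{bmatrix} \succeq 0\big\}$. Substituting $\Delta = YG - \Theta$ turns the relevant argument into $\Delta - W_0 G = (Y-W_0)G - \Theta$; since the $(1,1)$-block $-Q$ enters the quadratic form only through $(\cdot)^\top Q (\cdot)$, negating this argument leaves the value of the form unchanged, and rearranging then produces exactly the block-diagonal QMI appearing in the statement. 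All of these manipulations are equivalences, so the two sets coincide.

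The only substantive content is imported from the invoked lemma: the nontrivial direction ``$\Theta$ satisfies the QMI $\Rightarrow \Theta \in \Sigma_\Theta^{\mathrm{S,G}}$'' requires exhibiting a genuine full-size noise matrix $W \in \mathcal{W}$ with $(Y-W)G = \Theta$, and this is precisely the $\supseteq$ inclusion of Lemma~\ref{lemma:Ana:GTrick}, where such a $W$ is built explicitly via an $\hat W$ with $\hat W G = \Delta$. Thus the main (and only real) obstacle has already been dispatched, and here it suffices to invoke it and carry out the substitution bookkeeping. I would close by remarking that this casts $\Sigma_\Theta^{\mathrm{S,G}}$ in the same QMI template as Theorem~\ref{theo:Param:Consistency} and Corollary~\ref{corr:param:consistency}, with multiplier $\begin{bmatrix} -Q & 0 \\ 0 & G^\top R G\end{bmatrix}$, so that the price of dropping the consistency constraint can be read off directly from the difference in the lower-right blocks.
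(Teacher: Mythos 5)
Your proposal is correct and is essentially the paper's own (one-line) proof: invoke the definition of $\Sigma_\Theta^{\mathrm{S,G}}$ from Theorem~\ref{Theo:Ana:InConsistFull}, set $\Delta = WG = YG-\Theta$, and apply Lemma~\ref{lemma:Ana:GTrick}; your added observation that $XG=I$ forces $G$ to have full column rank (so the lemma applies) is a detail the paper leaves implicit. One caveat: carried out carefully, your substitution yields the centered term $\Theta-(Y-W_0)G$ rather than $\Theta-(W_0-Y)G$ as printed in the theorem --- the paper's own example with $G=[1\;\,1]^\top$, which gives $(\theta-1)^2\le 2$, and the set equality in Corollary~\ref{corr:ana:seteq} both confirm the former --- so your claim of reproducing the stated QMI ``exactly'' glosses over what is in fact a sign typo in the theorem statement, not an error in your argument.
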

\vspace{0.5em}
\begin{proof}
	This follows from Theorem\,\ref{Theo:Ana:InConsistFull} by applying Lemma\,\ref{lemma:Ana:GTrick} on $WG$ for $W=Y-\Theta$ and $W_0=Y$. 
\end{proof}

Since $G$ is a design parameter, it can be chosen as any right inverse. 
In the following, we provide a suitable candidate, which can be computed explicitly.
To do so, we parameterize all solutions of $XG=I$ by a homogeneous and a particular solution $\hat{G}$
\begin{align*}
	G &= \hat{G} + X^{\perp\top}\Lambda \\
	\hat{G} &= R^{-1}X^\top (XR^{-1}X^\top)^{-1}
\end{align*} 
with $\Lambda\in\mathbb{R}^{(N-n)\times n}$ being an arbitrary matrix.
Simple calculations yield
\begin{equation*}
	G^\top R G = \hat{G}^\top R\hat{G} + \Lambda^\top X^{\perp}RX^{\perp\top}\Lambda \succeq \hat{G}^\top R\hat{G},
\end{equation*}
making $\hat{G}$ the optimal choice to minimize the volume described by the corresponding ellipse in $\Theta$.
Note, in general $\Sigma_\Theta^\mathrm{S,\hat{G}} \not\subseteq \Sigma_\Theta^\mathrm{S,G}$ due to the shift $YG$.\\
In the remaining work, we consider only the case $\hat{G}$.
\begin{corollary} \label{corr:ana:seteq}
	Suppose Assumption\,\ref{ass:param:Slater} is satisfied, $R\succ 0$, $X$ has full row rank and
	\begin{align*}
		G &= R^{-1}X^\top (XR^{-1}X^\top)^{-1} \\
		\tilde{G}&=X^{\perp\top}(X^\perp R X^{\perp\top})^{-1} \\
		\tilde{\Theta}_0 &= (Y-W_0)\tilde{G} = 0,
	\end{align*}
	then $\Sigma_\Theta = \Sigma_\Theta^\mathrm{S,G}$.
\end{corollary}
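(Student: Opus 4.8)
The plan is to reduce the claim to showing that, for the stated $G$ and $\tilde{G}$, the QMI representation of $\Sigma_\Theta$ from Corollary~\ref{corr:param:consistency} and the QMI representation of $\Sigma_\Theta^{\mathrm{S},G}$ from Theorem~\ref{theo:Ana:Superset} describe the same set. Both are ellipsoids in $\Theta$ of the shape $\star^\top\mathrm{diag}(-Q,S)\,[\,\Theta-\Theta_c;\,I\,]\succeq0$ with a block-diagonal middle matrix, so it is enough to match the centers $\Theta_c$ and the lower-right blocks $S$ of the two representations.

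First I would note that the triple $(\tilde{X},G,\tilde{G})$ with $\tilde{X}=X^\perp R$ and $G,\tilde{G}$ as in the statement is exactly the one used to prove Corollary~\ref{corr:param:consistency}: the inverses $(XR^{-1}X^\top)^{-1}$ and $(X^\perp RX^{\perp\top})^{-1}$ exist because $R\succ0$ and $X$, hence $X^\perp$, has full row rank, and one checks the defining identity \eqref{eq:Theor:Space}, $[X;\tilde{X}][G\ \tilde{G}]=I$, using $XX^{\perp\top}=0$. Hence Corollary~\ref{corr:param:consistency} applies, its standing hypotheses (Assumption~\ref{ass:param:Slater}, $R\succ0$, $X$ full row rank) being among those of the present statement. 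The hypothesis $\tilde{\Theta}_0=(Y-W_0)\tilde{G}=0$ then kills the cross term $-Q\tilde{\Theta}_0 M^{-1}\tilde{\Theta}_0^\top Q$, so the top-left block of $P$ reduces to $-Q$ and $\Sigma_\Theta=\{\Theta\mid\star^\top\mathrm{diag}(-Q,(XR^{-1}X^\top)^{-1})\,[\,\Theta-\Theta_0;\,I\,]\succeq0\}$. Moreover, with $\tilde{\Theta}_0=0$ the splitting $Y-W_0=\Theta_0X+\tilde{\Theta}_0\tilde{X}$ reduces to $Y-W_0=\Theta_0X$; right-multiplying by $G$ and using $XG=I$ gives $\Theta_0=(Y-W_0)G$, which is therefore the center of this ellipsoid.

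Next I would specialize Theorem~\ref{theo:Ana:Superset} to the same $G$, which is a valid right inverse since $XG=I$: it yields $\Sigma_\Theta^{\mathrm{S},G}$ as an ellipsoid with lower-right block $G^\top RG$ and center $(Y-W_0)G$ --- the latter being immediate also from $\Sigma_\Theta^{\mathrm{S},G}=\{(Y-W)G\mid W\in\mathcal{W}\}$, the image of the ellipsoid $\mathcal{W}$ (centered at $W_0$) under the affine map $W\mapsto YG-WG$. It then remains to verify the single identity $G^\top RG=(XR^{-1}X^\top)^{-1}$ for $G=R^{-1}X^\top(XR^{-1}X^\top)^{-1}$, which is a one-line substitution in which $R^{-1}R$ cancels. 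With equal centers and equal lower-right blocks, the two QMI descriptions coincide, hence $\Sigma_\Theta=\Sigma_\Theta^{\mathrm{S},G}$. I do not expect a genuine obstacle: the substance is entirely contained in Corollary~\ref{corr:param:consistency} and Theorem~\ref{theo:Ana:Superset}, and the only points needing care are that $\tilde{\Theta}_0=0$ simultaneously removes the cross term and fixes the center of $\Sigma_\Theta$ at $(Y-W_0)G$, and the elementary computation of $G^\top RG$.
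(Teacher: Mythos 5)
Your proposal is correct and follows essentially the same route as the paper, whose proof is just the one-line remark that the set equality follows by comparing Corollary\,\ref{corr:param:consistency} with Theorem\,\ref{theo:Ana:Superset}; you simply fill in the details (matching centers via $\Theta_0=(Y-W_0)G$ when $\tilde{\Theta}_0=0$, and the identity $G^\top RG=(XR^{-1}X^\top)^{-1}$). Note that your computation of the center of $\Sigma_\Theta^{\mathrm{S},G}$ as $(Y-W_0)G$ is the correct one (it follows from $\Delta-W_0G=-(\Theta-(Y-W_0)G)$ in Lemma\,\ref{lemma:Ana:GTrick}), so you are in effect also repairing the sign as printed in the statement of Theorem\,\ref{theo:Ana:Superset}.
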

\begin{proof}
	The set equality holds by comparing Corollary\,\ref{corr:param:consistency} and Theorem\,\ref{theo:Ana:Superset}.
\end{proof}
Hence, it suffices to check for $\tilde{\Theta}_0=0$, whether including consistency is beneficial.
If the regressand can be represented using the rows of $X$, i.e., $Y=\bar{\Theta}X$, then consistency does not provide any improvement.
Furthermore, this validates that $G = R^{-1}X^\top (XR^{-1}X^\top)^{-1}$ is a good choice, if no further scheme to compute $G$ is available. 
Since, we derived a condition when the sets are equal, we are now interested in a condition on $Y$, when $\Sigma_\Theta$ will only contain $\Theta_\mathrm{tr}$.
\begin{corollary} \label{corr:ana:convergence}
	Suppose Assumption\,\ref{ass:param:Slater} is satisfied, $R\succ 0$, $Q\succ 0$, $X$ has full row rank and
	\begin{equation*}
		\epsilon I \succeq M \succ 0	\qquad \epsilon > 0,
	\end{equation*}
	with corresponding $Y$, then
	\begin{equation*}
		\lim \limits_{\epsilon \to 0} \Sigma_\Theta = YG
	\end{equation*}
	with
	\begin{align*}
		G = R^{-1}X^\top(XR^{-1}X^\top)^{-1}.
	\end{align*}
\end{corollary}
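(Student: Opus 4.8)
The plan is to use the ellipsoidal description of $\Sigma_\Theta$ furnished by Corollary~\ref{corr:param:consistency} and to show that the ``radius'' of this ellipsoid is controlled by $M$, so that the set collapses onto its centre as $\epsilon\to 0$; the centre will turn out to be precisely $(Y-W_0)G$. To see the latter, I would first combine the splitting in Corollary~\ref{corr:param:consistency} with $XG=I$ and $X^\perp X^\top=0$ to obtain $(Y-W_0)G=\Theta_0 XG+\tilde\Theta_0 X^\perp R G=\Theta_0$; moreover, putting $\Theta=\Theta_0$ into the \ac{QMI} of Corollary~\ref{corr:param:consistency} leaves only the block $(XR^{-1}X^\top)^{-1}\succ 0$, so $\Theta_0\in\Sigma_\Theta$ for every admissible data set. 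It therefore suffices to show that every $\Theta\in\Sigma_\Theta$ converges to $\Theta_0$ as $\epsilon\to 0$.

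Next I would rewrite the \ac{QMI} of Corollary~\ref{corr:param:consistency} as the explicit ellipsoid $\{\Theta\mid(\Theta-\Theta_0)^\top\bar A(\Theta-\Theta_0)\preceq(XR^{-1}X^\top)^{-1}\}$, where $\bar A=Q+Q\tilde\Theta_0 M^{-1}\tilde\Theta_0^\top Q\succeq Q\succ 0$. Since $0\prec M\preceq\epsilon I$ yields $M^{-1}\succeq\epsilon^{-1}I$, this gives $\bar A\succeq\epsilon^{-1}Q\tilde\Theta_0\tilde\Theta_0^\top Q$. From $M=(X^\perp R X^{\perp\top})^{-1}-\tilde\Theta_0^\top Q\tilde\Theta_0$ and $M\to 0$ one reads off $\tilde\Theta_0^\top Q\tilde\Theta_0\to(X^\perp R X^{\perp\top})^{-1}\succ 0$, so $\tilde\Theta_0$ has full rank and, for $\epsilon$ sufficiently small, $Q\tilde\Theta_0\tilde\Theta_0^\top Q$ is bounded below by a fixed positive definite matrix. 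Hence $\lambda_{\min}(\bar A)\to\infty$, equivalently $\|\bar A^{-1}\|\to 0$, as $\epsilon\to 0$.

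To conclude, I would use the ellipsoid bound: for every $\Theta\in\Sigma_\Theta$ and every unit vector $u$ one has $\|\bar A^{1/2}(\Theta-\Theta_0)u\|^2\le u^\top(XR^{-1}X^\top)^{-1}u\le\|(XR^{-1}X^\top)^{-1}\|$, so $\|\Theta-\Theta_0\|\le\|\bar A^{-1}\|^{1/2}\|(XR^{-1}X^\top)^{-1}\|^{1/2}$, which tends to $0$. Together with $\Theta_0\in\Sigma_\Theta$ this says that $\Sigma_\Theta$ converges, in the Hausdorff metric, to the singleton $\{\Theta_0\}=\{(Y-W_0)G\}$.

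The step I expect to be the main obstacle is making the limit precise: the data, and hence $\tilde\Theta_0$ and $M$, vary along the family, so one must work with a uniform bound and fix what ``$\lim_{\epsilon\to 0}\Sigma_\Theta$'' means as a set limit; closely related is the need to guarantee that $\bar A$ grows in \emph{every} direction -- i.e.\ that $\tilde\Theta_0$ does not become rank deficient -- since otherwise $\Sigma_\Theta$ would retain a flat direction and not collapse to a point.
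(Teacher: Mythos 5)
Your proposal is correct and follows essentially the same route as the paper: apply Corollary~\ref{corr:param:consistency}, use $0\prec M\preceq\epsilon I$ to get $M^{-1}\succeq\epsilon^{-1}I$ and a lower bound on $\tilde\Theta_0\tilde\Theta_0^\top$, and conclude that the upper-left block of the \ac{QMI} blows up so the ellipsoid collapses to its centre. Your explicit verification that the centre $\Theta_0$ equals $(Y-W_0)G$ (via $XG=I$ and $X^\perp RG=0$) is a detail the paper leaves implicit, and the full-row-rank subtlety for $\tilde\Theta_0$ that you flag is present in the paper's own argument as well.
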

\begin{proof}
	First, we apply Corollary \ref{corr:param:consistency} to transform the \ac{QMI} to a diagonal structure.
	Since $X$ has full row rank and $R\succ 0$, it follows $XR^{-1}X^\top\succ 0$ and $X^\perp R X^{\perp\top}\succ 0$. 
	Moreover, using 
	\begin{equation*}
		\epsilon I \succeq M = (X^\perp R X^{\perp\top})^{-1} - \tilde{\Theta}_0^\top Q \tilde{\Theta}_0 \succ 0
	\end{equation*}
	with $\epsilon>0$ sufficiently small, we conclude $\tilde{\Theta}_0^\top Q \tilde{\Theta}_0\succ 0$ and $\sigma_{\mathrm{min}}(\tilde{\Theta}_0\tilde{\Theta}_0^\top)>0$.
	Next, we analyze the upper left block
	\begin{align*}
			-Q - Q\tilde{\Theta}_0 &M^{-1} \tilde{\Theta}_0^\top Q \preceq -Q - \frac{1}{\epsilon} Q\tilde{\Theta}_0 \tilde{\Theta}_0^\top Q \\
			&\preceq -(\sigma_\mathrm{min}(Q) + \frac{\sigma_\mathrm{min}(Q)^2}{\epsilon}\sigma_\mathrm{min}(\tilde{\Theta}_0 \tilde{\Theta}_0^\top))I.
	\end{align*}
	Plugging this into the \ac{QMI} for consistency and letting $\epsilon$ approach zero, reduces the set to $\Theta=YG$.
\end{proof}
As demonstrated in Corollaries\,\ref{corr:ana:seteq} and \ref{corr:ana:convergence}, $M$ can be analyzed to quantify the improvement by considering only consistent $W$.
If $\tilde{\Theta}_0=0$ holds, $M$ attains it largest value $(X^\perp R X^{\perp\top})^{-1}$ and $\Sigma_\Theta = \Sigma_\Theta^\mathrm{S,G}$.
If $M$ approaches zero, the corresponding set converges to a single matrix.
This corresponds to the case that the true $W$ approaches the boundary of $\mathcal{W}$ such that only $\Theta_{\mathrm{tr}}$ can explain the data.


\section{ESTIMATOR DESIGN}\label{sec:Estimator}

Now, we demonstrate how to use Theorem\,\ref{theo:Param:Consistency} to design an estimator for a discrete-time \ac{LTI} system of the form
\begin{equation}
	\left[\begin{array}{c}\label{eq:Est:LinSystem}
		x_{k+1}  \\\hline z_{\mathrm{p},k} \\ y_k
	\end{array}\right] = 
	\left[\begin{array}{c|c}
		A_{\mathrm{tr}} & B_{\mathrm{p,tr}} \\ \hline
		C_\mathrm{p} & D_\mathrm{p} \\
		C_{\mathrm{y,tr}} & D_{\mathrm{yp,tr}}  
	\end{array}\right]
	\left[\begin{array}{c}
		x_{k} \\\hline w_{\mathrm{p},k}
	\end{array}\right]
\end{equation}
with $x\in\mathbb{R}^n$ being the state of the system, $w_{\mathrm{p}}\in\mathbb{R}^{n_\mathrm{p}}$ being a performance input, $y\in\mathbb{R}^{p_\mathrm{y}}$ being the only measurements available to the estimator, and $z_{\mathrm{p}}\in\mathbb{R}^{p_\mathrm{p}}$ being the signal to be estimated. 
The system matrices $A_{\mathrm{tr}}\in\mathbb{R}^{n \times n}$, $B_{\mathrm{p,tr}}\in\mathbb{R}^{n \times n_\mathrm{p}}$, $C_{\mathrm{y,tr}}\in\mathbb{R}^{p_\mathrm{y} \times n}$ and $D_{\mathrm{yp,tr}}\in\mathbb{R}^{p_\mathrm{y} \times n_\mathrm{p}}$ are assumed to be unknown. The matrices $C_\mathrm{p}\in\mathbb{R}^{p_\mathrm{p} \times n}$ and $D_\mathrm{p}\in\mathbb{R}^{p_\mathrm{p} \times n_\mathrm{p}}$ are assumed to be known, but the following steps can be extended similarly by deriving a third \ac{QMI} for $C_\mathrm{p}$ and $D_\mathrm{p}$.
As an example, for the state observer synthesis, they are known, i.e., $C_\mathrm{p} = I$ and $D_\mathrm{p}= 0$.
The goal is to synthesize an estimator
\begin{equation}\label{eq:Est:Estimator}
	\left[\begin{array}{c}
		\hat{x}_{k+1}  \\ \hline \hat{z}_{p,k}
	\end{array}\right] = 
	\left[\begin{array}{c|c}
		A_\mathrm{E} & B_\mathrm{E} \\ \hline
		C_\mathrm{E} & D_\mathrm{E}
	\end{array}\right]
	\left[\begin{array}{c}
		\hat{x}_{k} \\\hline y_{k}
	\end{array}\right]
\end{equation}
to minimize the effect of $w_{\mathrm{p}}$ on the estimation error $z_{\mathrm{p}}-\hat{z}_{p}$ by minimizing the $\mathcal{H}_\infty$-norm of the corresponding transfer matrix. 
The estimator state is $\hat{x}\in\mathbb{R}^{n}$ with corresponding system matrices.
Note that the $\mathcal{H}_2$-norm and quadratic performance can be handled similarly \cite{Weiland1994, Scherer2000}.

In this work, we consider a two-stage data collection phase. 
In the first phase, we assume to have access to $N$ measurements of the full state $x$ and $N-1$ measurements of the output $y$ and performance input $w_{\mathrm{p}}$. 
This data collection process is affected by some additive noise $w$ and $v$, which can be rewritten in matrix notation as
\begin{align}
	\left[\begin{array}{c}
		X_+ \\\hline Y
	\end{array}\right] = \left[\begin{array} {c|c}
	A_{\mathrm{tr}} & B_{\mathrm{p,tr}} \\ \hline C_{\mathrm{y,tr}} & D_{\mathrm{yp,tr}}
	\end{array}\right]
	\left[\begin{array}{c}
		X \\ \hline W_{\mathrm{p}}
	\end{array}\right] +
	\left[\begin{array}{c}
		W \\ \hline V
	\end{array}\right]. \label{eq:Estim:MatrixNota}
\end{align}
Furthermore, we assume to have prior knowledge about the noise such that $W\in\mathcal{W}_W$ and $V\in\mathcal{W}_V$ as in Section\,\ref{sec:Param} with corresponding matrices $\Phi_W$ and $\Phi_V$.
In the second phase, we only have access to $y$ and want to employ the synthesized estimator to estimate $z_p$, e.g., the full state $x$.
To do so, we use the collected data of the first phase and apply Theorem\,\ref{theo:Param:Consistency} and Theorem\,\ref{theo:Ana:Superset} twice to get the following descriptions of all system matrices consistent with the data
\begin{gather}
\begin{align*}
	\star^\top &P_1 \left[\begin{array}{c} \left[A-A_0\quad B_\mathrm{p}-B_{\mathrm{0,p}}\right] \\I\end{array}\right] &\succeq 0, \\
	\star^\top &P_2 \left[\begin{array}{c} \left[C_\mathrm{y}-C_\mathrm{0,y}\quad D_\mathrm{yp}-D_\mathrm{0,yp}\right] \\I\end{array}\right] &\succeq 0.
\end{align*}
\end{gather}
This can be rewritten as a \ac{LFT}
\begin{align*}
	\left[\begin{array}{c}
		x_{k+1}  \\ \hline z_k \\ z_{\mathrm{p},k} \\ y_k
	\end{array}\right] &= 
	\left[\begin{array}{c|cc}
		A_0 & \left[I_{n}\;0\right] & B_\mathrm{0,p}\\ \hline
		\begin{bmatrix} I \\ 0	\end{bmatrix} & 0 & \begin{bmatrix} 0 \\ I	\end{bmatrix} \\
		C_\mathrm{p} & 0 & D_\mathrm{p} \\
		C_\mathrm{0,y} & \left[0\;I_{p_y}\right]  & D_\mathrm{0,yp}  
	\end{array}\right]
	\left[\begin{array}{c}
		x_{k} \\ \hline w_k \\ w_{\mathrm{p},k}
	\end{array}\right] \\
	w_k &= \Delta z_k, \qquad \star^\top P \begin{bmatrix}\Delta \\ I\end{bmatrix} \succeq 0,
\end{align*}
with
\begin{equation*}
	P = \star^\top P_1 \begin{bmatrix} I_n & 0 & 0 \\ 0 & 0 & I_{n+n_p} \end{bmatrix}+ \star^\top P_2 \begin{bmatrix} 0 & I_{p_y} & 0 \\ 0 & 0 & I_{n+n_p}\end{bmatrix}.
\end{equation*}
Note that $P_1$ and $P_2$ can each be multiplied by a positive constant. This introduces additional decision variables to reduce conservatism through scaling. From this point on, standard robust control techniques are applied to synthesize an estimator.
\begin{theorem} \label{theo:estimator:EstimatorSyn}
	Suppose a general \ac{LFT} 
		\begin{align*}
			\left[\begin{array}{c}
				x_{k+1}  \\ \hline z_k \\ z_{\mathrm{p},k} \\ y_k
			\end{array}\right] &= 
			\left[\begin{array}{c|cc}
				A   & B_\mathrm{w} & B_\mathrm{p} \\ \hline
				C_\mathrm{w} & D_\mathrm{w} & D_\mathrm{wp} \\
				C_\mathrm{p} & D_\mathrm{pw} & D_\mathrm{p} \\
				C_\mathrm{y} & D_\mathrm{yw} & D_\mathrm{yp}
			\end{array}\right]
			\left[\begin{array}{c}
				x_{k} \\ \hline w_k \\ w_{\mathrm{p},k}
			\end{array}\right] \\
			w_k &= \Delta z_k, \qquad \star^\top P \left[\begin{array}{c}\Delta \\ I\end{array}\right] \succeq 0,
		\end{align*}
		then an estimator as in \eqref{eq:Est:Estimator} achieves an $\mathcal{H}_\infty$-norm smaller than $\gamma$, if the following \ac{QMI} is satisfied
		\begin{align*}
			\star^\top \begin{bmatrix}
				-\mathbf{X} &         0 & 0 &               0 & 0 \\
				          0 & -\gamma I & 0 &               0 & 0 \\
				          0 &         0 & P &               0 & 0 \\ 
				          0 &		  0 & 0 & \mathbf{X}^{-1} & 0 \\
				          0 &         0 & 0 &               0 & \frac{1}{\gamma}I
			\end{bmatrix}
			 \begin{bmatrix}
				I & 0 & 0 \\
				0 & 0 & I \\
				0 & I & 0 \\
				\mathbf{C_1} & \mathbf{D_1} & \mathbf{D_{12}} \\
				\mathbf{A} & \mathbf{B_1} & \mathbf{B_{2}} \\
				\mathbf{C_2} & \mathbf{D_{21}} & \mathbf{D_{2}} \\
		\end{bmatrix} \prec 0			
		\end{align*}
		with $\mathbf{X} \succ 0$ and
		\begin{align*}
			\mathbf{X} &= \begin{bmatrix} Y & Y \\ Y & X	\end{bmatrix}
			&\mathbf{A} &= \begin{bmatrix} YA & YA \\ K & XA+LC_\mathrm{y} \end{bmatrix} \\
			\mathbf{B_1} &= \begin{bmatrix} YB_\mathrm{w} \\ XB_\mathrm{w} + LD_\mathrm{yw}	\end{bmatrix}
			&\mathbf{B_2} &= \begin{bmatrix} YB_\mathrm{p} \\ XB_\mathrm{p} + LD_\mathrm{yp}	\end{bmatrix}\\
			\mathbf{C_1} &= \begin{bmatrix} C_\mathrm{w} & C_\mathrm{w}\end{bmatrix}
			&\mathbf{C_2} &= \begin{bmatrix} C_\mathrm{p}-M & C_\mathrm{p}-NC_\mathrm{y}\end{bmatrix}\\
			\mathbf{D_1} &= D_\mathrm{w}
			&\mathbf{D_{12}} &= D_\mathrm{wp}\\			
			\mathbf{D_{21}} &= D_\mathrm{pw}-ND_\mathrm{yw}
			&\mathbf{D_2} &= D_\mathrm{p}-ND_\mathrm{yp}.
		\end{align*}
		The corresponding estimator can be computed as
		\begin{align*}
			&\left[\begin{array}{cc}
				A_\mathrm{E} & B_\mathrm{E} \\ C_\mathrm{E} &D_\mathrm{E}
			\end{array}\right]=\\
			&\begin{bmatrix}
				Y-X & 0 \\ 0 &I
			\end{bmatrix}^{-1}
			\begin{bmatrix}
				KY^{-1} - XAY^{-1} & L \\
				MY^{-1} & N
			\end{bmatrix}		
			\begin{bmatrix}
				Y^{-1} &0 \\ CY^{-1} & I
			\end{bmatrix}^{-1}.
		\end{align*}
\end{theorem}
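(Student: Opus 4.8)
The plan is to recognize the displayed \ac{QMI} as a linearized, convexified version of the standard robust $\mathcal{H}_\infty$ performance inequality for the closed loop, and to establish sufficiency by running the usual full-order estimator change of variables backwards.

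First I would form the closed loop of the \ac{LFT} plant with the estimator \eqref{eq:Est:Estimator}: with augmented state $\xi_k=[x_k;\hat x_k]$ one obtains a linear system driven by $(w_k,w_{\mathrm p,k})$ with outputs $(z_k,\,z_{\mathrm p,k}-\hat z_{\mathrm p,k})$ closed by $w_k=\Delta z_k$, where $\star^\top P[\Delta;I]\succeq 0$. I would then record the robust performance certificate: by the discrete-time bounded-real lemma together with a full-block S-procedure for the multiplier $P$, such a closed loop is well-posed and has $\mathcal{H}_\infty$-norm below $\gamma$ for every admissible $\Delta$ whenever there is $\mathcal{X}\succ 0$ satisfying a dissipation inequality whose supply rate is assembled from $\mathcal{X}$ on the $\xi$-channel, $P$ on the $z$–$w$ channel, and $\mathrm{diag}(-\gamma I,\tfrac1\gamma I)$ on the $w_{\mathrm p}$–$(z_{\mathrm p}-\hat z_{\mathrm p})$ channel. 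After Schur-complementing the quadratic $\mathcal X$- and performance-terms, this is a matrix inequality with the same outer block pattern as the one displayed, but with the raw closed-loop matrices and bilinear in $\mathcal X$ and $(A_\mathrm E,B_\mathrm E,C_\mathrm E,D_\mathrm E)$.

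Second, I would connect that certificate to the displayed \ac{QMI} through the classical full-order estimator change of variables. Given a feasible tuple with $\mathbf X\succ 0$, its block structure $\mathbf X=\begin{bmatrix}Y&Y\\Y&X\end{bmatrix}$ forces $Y\succ 0$ and $Y-X\prec 0$ (block Schur complement), hence $Y$ and $Y-X$ are invertible and the displayed formula defines $(A_\mathrm E,B_\mathrm E,C_\mathrm E,D_\mathrm E)$. Running the substitution in reverse — i.e. using $Y$ and $Y-X$ to reconstruct a Lyapunov matrix $\mathcal X\succ 0$, the closed-loop matrices, and an explicit congruence/coordinate transformation — maps the displayed \ac{QMI}, in which $K,L,M,N$ (with $N=D_\mathrm E$) re-express the products of the blocks of $\mathbf X$ with $A_\mathrm E,B_\mathrm E$ and with $C_\mathrm E,D_\mathrm E$, back onto the bilinear analysis inequality of the first step; invoking the analysis certificate then yields the desired $\mathcal{H}_\infty$-bound $\gamma$.

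I expect the bookkeeping of the change of variables to be the main obstacle: one has to verify that the congruence/coordinate transformation reproduces exactly the stated blocks — in particular the two-column structure of $\mathbf A$, $\mathbf C_1$, $\mathbf C_2$ that encodes the coordinate change, and the shifts $\mathbf D_{21}=D_\mathrm{pw}-ND_\mathrm{yw}$, $\mathbf D_2=D_\mathrm p-ND_\mathrm{yp}$ — and that the displayed estimator formula is the exact inverse of the variable substitution. Some care is also needed in embedding $P$ into the dissipation inequality with the correct sign and block ordering, so that only the sufficiency direction of the S-procedure is used, and in checking the well-posedness of the \ac{LFT} interconnection that the bounded-real argument presupposes.
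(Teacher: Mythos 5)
Your plan coincides with the paper's own (sketched) proof: it combines the same robust $\mathcal{H}_\infty$ analysis certificate with multiplier $P$ (the paper cites a corollary of Holicki for this step, which is precisely the discrete-time bounded-real-lemma-plus-S-procedure inequality you describe) with the classical full-order estimator change of variables and congruence transformations to arrive at the displayed convex \ac{QMI}. Your extra care in running the substitution backwards from a feasible point and noting that $\mathbf{X}\succ 0$ forces $Y$ and $Y-X$ to be invertible only makes explicit what the paper delegates to the cited references.
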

\vspace{0.5em}
\begin{proof}
	Since the estimator synthesis is rather well established in literature, and for space reasons, we give only a sketch of the proof. For more details, see \cite{Weiland1994, Scherer2008, Geromel1999}.
	According to \cite[Corollary 3.1]{Holicki2023}, it is possible to determine an upper bound $\gamma$ of the $\mathcal{H}_\infty$-norm, if the following \ac{QMI} 
	\begin{align*}
		\star^\top \begin{bmatrix}
		    -\mathcal{X} &         0 & 0 &           0 & 0 \\
			0            & -\gamma I & 0 &           0 & 0 \\
			0            &         0 & P &           0 & 0 \\
			0            &		   0 & 0 & \mathcal{X} & 0 \\
			0            &         0 & 0 &           0 & \frac{1}{\gamma}I
		\end{bmatrix}
		\begin{bmatrix}
			I & 0 & 0 \\
			0 & 0 & I \\
			0 & I & 0 \\
			C_\mathrm{w} & D_\mathrm{w} & D_\mathrm{wp} \\
			A & B_\mathrm{w} & B_\mathrm{p} \\
			C_\mathrm{p} & D_\mathrm{pw} & D_\mathrm{p} \\
		\end{bmatrix} \prec 0			
	\end{align*}
	is satisfied with $\mathcal{X}\succ0$.
	Next, we apply the following transformation on the closed loop including the estimator to convexify the problem \cite[Theorem 4.2]{Weiland1994}
	\begin{align*}
		&\begin{bmatrix}
			A_\mathrm{E} & B_\mathrm{E} \\
			C_\mathrm{E} & D_\mathrm{E}
		\end{bmatrix}=\\
		&\begin{bmatrix}
			\hat{Y}^{-1}-X & 0 \\ 0 &I
		\end{bmatrix}^{-1}
		\begin{bmatrix}
			\hat{K} - XA\hat{Y} & L \\
			\hat{M} & N
		\end{bmatrix}		
		\begin{bmatrix}
			\hat{Y} &0 \\ C_\mathrm{y}\hat{Y} & I
		\end{bmatrix}^{-1}\\
		&\mathcal{X} = \begin{bmatrix} \hat{Y} & \hat{Y} \\ I & 0\end{bmatrix}^{-1} \begin{bmatrix} I & 0 \\ X & \hat{Y}^{-1}-X\end{bmatrix}.
	\end{align*}
	After applying the congruence transformations  $\mathrm{diag}(\hat{Y}^{-1},I,I,I)$ and $\mathrm{diag}(\hat{Y}^{-1},I)$ and defining the new variables $Y = \hat{Y}^{-1}$, $K=\hat{K}Y$ and $M=\hat{M}Y$, we arrive at the corresponding equations.
\end{proof}

Using the Schur complement, the \ac{QMI} can be transformed into an \ac{SDP} involving the unknowns $Y$, $X$, $K$, $L$, $M$, $N$, and any affine parameterization of $P$, while simultaneously minimizing $\gamma$ \cite[Lemma 4.1]{Weiland1994}.
Note that Theorem\,\ref{theo:estimator:EstimatorSyn} requires the uncertainty $\Delta$ to satisfy a \ac{QMI} in $\Delta$ as given in Theorem\,\ref{theo:Param:Consistency}.
We emphasize again that methods based on data informativity lead to a \ac{QMI} in $\Delta^\top$ \cite{Waarde2020a}, resulting in an \ac{SDP} suited for robust state feedback synthesis \cite[Section 8.3]{Weiland1994}.

\section{NUMERICAL EXAMPLE}

In this section, we compare different approaches for solving the estimator synthesis problem using a numerical example. 
We consider a simple fourth-order system with \cite{Berberich2021}
\begin{align*}
	&A_\mathrm{tr} = \begin{bmatrix}
		\!0.921 &     \!\!0 & \!\!0.041 &     \!\!0 \\
		    \!\!0 & \!\!0.918 &    \!\! 0 & \!\!0.033 \\
		    \!\!0 &     \!\!0 & \!\!0.924 &     \!\!0 \\
		   \!\! 0 &     \!\!0 &     \!\!0 & \!\!0.937
	\end{bmatrix}, 
	& C_\mathrm{y,tr} = \begin{bmatrix} \!1 & \!\!0 & \!\!0 & \!\!0\\ \!0 & \!\!1 & \!\!0 & \!\!0	\end{bmatrix},\\		
	&
	B_\mathrm{p,tr} = \begin{bmatrix}
		\!0.017 & \!0.001 & \!0 & \!0 \\
		\!0.001 & \!0.023 & \!0 & \!0 \\
		    \!0 & \!0.061 & \!0 & \!0 \\
		\!0.072 &     \!0 & \!0 & \!0 
	\end{bmatrix},
	&D_\mathrm{yp,tr} = \begin{bmatrix} 0 & \!\! I\end{bmatrix}.
\end{align*}
for which we estimate the true state, i.e., $z_{\mathrm{p},k}=x_k$.
To this end, we generate data $\{x_k,x_{k+1},w_{\mathrm{p},k},y_k\}_{k=0}^{N-1}$ according to \eqref{eq:Est:LinSystem} with $N=100$ by uniformly sampling $100$ times the initial condition from $[-2,2]^4$ and $w_{\mathrm{p},k}\in[-2,2]^4$.
Furthermore, we assume $\{x_{k+1},y_k\}_{k=0}^{N-1}$ is perturbed by additive measurement noise $W\in\mathcal{W}_W$ and $V\in\mathcal{W}_V$ respectively, as in \eqref{eq:Estim:MatrixNota}. The noise terms are parameterized by $\Phi_W=\mathrm{diag}(-I,0.01^2I)$ and $\Phi_V=\mathrm{diag}(-I,0.01^2I)$, i.e., $\sigma_{\mathrm{max}}(W)\leq0.01$, $\sigma_{\mathrm{max}}(V)\leq0.01$ , $x_{k+1}=x_{k+1,\mathrm{tr}}+w_k$ and $y_{k}=y_{k,\mathrm{tr}}+v_k$. 
The index tr denotes the true value.
Next, we split $W$ into two components $W= \tau_1\Theta_W X + \tau_0\tilde{\Theta}_{W,0}X^\perp$ and choose random $\tilde{\Theta}_{W,0}$ and $\Theta_W$ with $\sigma_{\mathrm{max}}(\tilde{\Theta}_{W,0}X^\perp) = 0.01$, $\tau_0\in[0,1)$. 
We choose $\tau_1$ such that $\sigma_{\mathrm{max}}(W)=0.01$.
The term $\tau_0$ allows us to set the $X^\perp$ component of $W$. 
The case $\tau_0=0$ corresponds to Corollary\,\ref{corr:ana:seteq}, for which $W$ affects the data generation in the same way as $X$.
Hence, it is impossible to distinguish between the term $\Theta_{\mathrm{tr}}X$ and $W=\tilde{\Theta}_{W,0}X$, which leads to a larger set $\Sigma_\Theta$.
For $\tau_0\to 1$, this leads to Corollary\,\ref{corr:ana:convergence}, as $\Theta_{\mathrm{tr}}X$ and $W=\Theta_W X^\perp$ form complementary row spaces, such that $W$ can be reconstructed exactly from $Y$ and hence be compensated for.
The noise matrix $V$ is handled similarly.
The separation between perfectly known $x_k$, but noise affected $x_{k+1}$ is done to compare the different methods for different noise while the regressor $[X^\top\;W_\mathrm{p}^\top]^\top$ remains the same.\\
\begin{figure}[t]
	\begin{center}
		\scalebox{0.6}{
%
%
\definecolor{mycolor1}{rgb}{0.85000,0.32500,0.09800}%
\definecolor{mycolor2}{rgb}{0.92900,0.69400,0.12500}%
\begin{tikzpicture}

\begin{axis}[%
width=0.951\linewidth,
height=0.75\linewidth,
at={(0\linewidth,0\linewidth)},
scale only axis,
xmin=0,
xmax=1,
xlabel style={font=\color{white!15!black}},
xlabel={$\tau_0$},
ymin=0,
ylabel style={font=\color{white!15!black}},
ylabel={Relative Error $\epsilon$},
axis background/.style={fill=white},
axis x line*=bottom,
axis y line*=left,
legend style={at={(0.03,0.03)}, anchor=south west, legend cell align=left, align=left, draw=white!15!black}
]
\addplot [color=blue, line width=3.0pt]
  table[row sep=crcr]{%
0	0.104001396380703\\
0.0256407692307692	0.103956871638062\\
0.0512815384615385	0.103823265007669\\
0.0769223076923077	0.103600493260284\\
0.102563076923077	0.103288412650807\\
0.128203846153846	0.102886818194301\\
0.153844615384615	0.102395437397251\\
0.179485384615385	0.101813927907845\\
0.205126153846154	0.101141867195602\\
0.230766923076923	0.100378747677108\\
0.256407692307692	0.0995239591109133\\
0.282048461538462	0.0985767834959877\\
0.307689230769231	0.0975363705395659\\
0.33333	0.0964017262191921\\
0.358970769230769	0.0951716838401632\\
0.384611538461538	0.0938448811431225\\
0.410252307692308	0.0924197249784571\\
0.435893076923077	0.0908943505944379\\
0.461533846153846	0.0892665826715468\\
0.487174615384615	0.0875338646256325\\
0.512815384615385	0.085693200551327\\
0.538456153846154	0.0837410608808905\\
0.564096923076923	0.0816732772385971\\
0.589737692307692	0.0794849057306238\\
0.615378461538462	0.0771700504570847\\
0.641019230769231	0.074721639013438\\
0.66666	0.0721311195019139\\
0.692300769230769	0.0693880528452444\\
0.717941538461538	0.0664795556399435\\
0.743582307692308	0.0633894960897097\\
0.769223076923077	0.0600973404539732\\
0.794863846153846	0.0565763805707271\\
0.820504615384615	0.0527909684447389\\
0.846145384615385	0.0486918651862881\\
0.871786153846154	0.0442079394454852\\
0.897426923076923	0.0392298226595336\\
0.923067692307692	0.0335733493610369\\
0.948708461538462	0.0268797588879185\\
0.95	0.0265033662857397\\
0.96	0.0234126713915997\\
0.97	0.01991919792861\\
0.974349230769231	0.0182248035688908\\
0.98	0.0157998376692823\\
0.99	0.0104793623876061\\
0.995	0.00675519423107888\\
0.999	0.00184094240831189\\
0.9995	0.000687097939412844\\
0.99999	-0.00168935332444921\\
};
\addlegendentry{Theorem 1}

\addplot [color=mycolor1, line width=2.5pt]
  table[row sep=crcr]{%
0	0.104001396380158\\
0.0256407692307692	0.103998511972873\\
0.0512815384615385	0.103989855995869\\
0.0769223076923077	0.103975430487978\\
0.102563076923077	0.10395523372194\\
0.128203846153846	0.103929264757107\\
0.153844615384615	0.103897521591537\\
0.179485384615385	0.103860002629629\\
0.205126153846154	0.103816704387041\\
0.230766923076923	0.103767623457584\\
0.256407692307692	0.103712753178536\\
0.282048461538462	0.103652088013899\\
0.307689230769231	0.103585617591485\\
0.33333	0.10351333125914\\
0.358970769230769	0.10343521444096\\
0.384611538461538	0.103351249246736\\
0.410252307692308	0.103261416138963\\
0.435893076923077	0.103165685480834\\
0.461533846153846	0.103064028558014\\
0.487174615384615	0.102956404811496\\
0.512815384615385	0.102842769481507\\
0.538456153846154	0.10272306610588\\
0.564096923076923	0.102597228635271\\
0.589737692307692	0.102465178139113\\
0.615378461538462	0.102326818081357\\
0.641019230769231	0.102182034434981\\
0.66666	0.1020306865949\\
0.692300769230769	0.101872606408049\\
0.717941538461538	0.101707583819862\\
0.743582307692308	0.10153536086071\\
0.769223076923077	0.101355608351492\\
0.794863846153846	0.10116790086924\\
0.820504615384615	0.100971680011407\\
0.846145384615385	0.100766182349654\\
0.871786153846154	0.100550320935424\\
0.897426923076923	0.100322434711247\\
0.923067692307692	0.10007977238447\\
0.948708461538462	0.0998170617210422\\
0.95	0.0998031282098408\\
0.96	0.0996922280686916\\
0.97	0.0995747338817726\\
0.974349230769231	0.0995208712315679\\
0.98	0.0994474586183976\\
0.99	0.0993022313211865\\
0.995	0.0992141844118909\\
0.999	0.0991164497775736\\
0.9995	0.0990968212021198\\
0.99999	0.0990636280850882\\
};
\addlegendentry{Theorem 3}

\addplot [color=mycolor2, dashed, line width=2.5pt]
  table[row sep=crcr]{%
0	0.104002080453006\\
0.0256407692307692	0.103957554980007\\
0.0512815384615385	0.103823948500215\\
0.0769223076923077	0.103601175368186\\
0.102563076923077	0.103289094670349\\
0.128203846153846	0.10288749919348\\
0.153844615384615	0.102396117302257\\
0.179485384615385	0.10181460701629\\
0.205126153846154	0.101142545973233\\
0.230766923076923	0.100379425392238\\
0.256407692307692	0.0995246358581725\\
0.282048461538462	0.0985774587453708\\
0.307689230769231	0.0975370449534738\\
0.33333	0.0964023994717645\\
0.358970769230769	0.0951723552530225\\
0.384611538461538	0.0938455499692218\\
0.410252307692308	0.0924203911711429\\
0.435893076923077	0.0908950171138837\\
0.461533846153846	0.0892672468578515\\
0.487174615384615	0.0875345266418006\\
0.512815384615385	0.0856938601362237\\
0.538456153846154	0.0837417182074117\\
0.564096923076923	0.081673932228997\\
0.589737692307692	0.0794855573123317\\
0.615378461538462	0.07717069974181\\
0.641019230769231	0.0747222847477842\\
0.66666	0.0721317610656557\\
0.692300769230769	0.0693886914548028\\
0.717941538461538	0.0664801914311904\\
0.743582307692308	0.0633901284722687\\
0.769223076923077	0.0600979684554905\\
0.794863846153846	0.0565770053960025\\
0.820504615384615	0.0527915878478955\\
0.846145384615385	0.0486924794730063\\
0.871786153846154	0.0442085493159091\\
0.897426923076923	0.0392304271592537\\
0.923067692307692	0.0335739463234867\\
0.948708461538462	0.0268803481771068\\
0.95	0.026503955775683\\
0.96	0.0234132565366521\\
0.97	0.0199197793713722\\
0.974349230769231	0.018225382521932\\
0.98	0.0158004128698715\\
0.99	0.0104799329253677\\
0.995	0.00675575986198871\\
0.999	0.00184192560149751\\
0.9995	0.000688840537002847\\
0.99999	-0.00167740875225576\\
};
\addlegendentry{Informativity}

\end{axis}
\end{tikzpicture}
		\caption{Relative Error $\epsilon$ for different $\tau_0$.}
		\label{fig:exa:ErrorRatio}
	\end{center}
\end{figure}
We compare three different approaches. 
The first one employs our new parameterization described in Theorem\,\ref{theo:Param:Consistency}. 
The second uses Theorem\,\ref{theo:Ana:Superset} to represent \cite{Berberich2019} with $G$ from Corollary\,\ref{corr:ana:seteq}.
Last, we employ the data informativity framework described in \cite{Waarde2022}. 
To do so, we first use the Dualization Lemma in \cite{Weiland1994} to transform the \acp{QMI} in $W$ and $V$ into \acp{QMI} of $W^\top$ and $V^\top$.
By applying the same procedure as in \cite{Waarde2022}, we get a consistent parameterization in $[A \; B_\mathrm{p}]^\top$ and $[C_\mathrm{y}\; D_\mathrm{yp}]^\top$. 
Applying the Dualization Lemma a second time yields two \acp{QMI} in $[A\;B_\mathrm{p}]$ and $[C_\mathrm{y}\; D_\mathrm{yp}]$ on which we apply Theorem\,\ref{theo:estimator:EstimatorSyn}.
In Fig.\,\ref{fig:exa:ErrorRatio}, we illustrate the relative error $\epsilon=\frac{\gamma-\gamma_{\mathrm{tr}}}{\gamma_{\mathrm{tr}}}$ averaged over all initial conditions for the achieved upper bound of the $\mathcal{H}_\infty$-estimator synthesis with respect to the synthesis for the true system $\gamma_{\mathrm{tr}}$.

As expected from Corollary\,\ref{corr:ana:seteq}, for $\tau_0=0$ the error ratio $\epsilon$ is the largest.
In contrast, for $\tau_0\to1$, the relative error $\epsilon$ approaches zero when considering Theorem\,\ref{theo:Param:Consistency} and the informativity framework.
Since, the invertibility conditions for dualization are met, both approaches yield an equivalent representation.
Note that the approach based on Theorem\,\ref{theo:Ana:Superset} performs the worst due to not including consistency.

\section{CONCLUSION}
In this work, we introduced a new data-driven parameterization to describe the set of all parameter matrices consistent with noise perturbed data and demonstrated how to use it to synthesize an estimator with performance guarantees.
In contrast to existing consistent parameterizations, this approach does not require additional invertibility assumptions.
Furthermore, we employed the parameterization to derive verifiable conditions on the data, that determine when including consistency leads to a tighter set description and when it does not provide any additional benefit.

\bibliographystyle{IEEEtran}
\bibliography{2025_LCSS_Literature.bib}

\end{document}